\newtheorem{proposition}{Proposition}
\theoremstyle{definition}
\theoremstyle{remark}
\begin{document}

\title{Shinohara Rock-Paper-Scissors\thanks{I thank Shinpei Noguchi and Takahiro Watanabe for their valuable comments. I also thank the co-editor and an anonymous reviewer for their helpful suggestions regarding the clarity of presentation and the pedagogical aspect of Shinohara RPS.  I am grateful to the radio program After 6 Junction, which introduced Shinohara RPS and inspired part of my interest in the topic.}}

\author{Takashi Ui\thanks{Kanagawa University and Hitotsubashi University. E-mail: oui@econ.hit-u.ac.jp.}}

\date{April 2025}
\maketitle

\begin{abstract}
This paper analyzes Shinohara Rock-Paper-Scissors (RPS), a variant of the classic RPS game introduced by board game designer Yoshiteru Shinohara. Players compete against a host who always plays rock, so they choose either rock or paper. The twist is that if two or more players choose paper, they are eliminated, and the last remaining player is the winner, creating strategic tension among the players. There exists a unique symmetric subgame perfect equilibrium, in which the probability of choosing paper satisfies the equation $(1-p)^{n-1} + p^{n-1}/n = 1/n$ when $n$ players remain. The game also admits a continuum of asymmetric equilibria. 
\newline\noindent\textit{JEL classification numbers}: C72, C73.
\newline\noindent\textit{Keywords}: rock paper scissors, subgame perfect equilibrium.
\end{abstract}

\newpage
\section{Introduction}

The Rock-Paper-Scissors (RPS) game offers an interactive method for selecting a prize winner from an audience at an event.
All contestants are asked to stand up, and the host challenges them to play RPS simultaneously. 
If the host shows rock (paper, scissors) on stage, 
the contestants who show paper (scissors, rock, respectively) move on to the next round, while the other contestants are eliminated and asked to sit down. 
The competition continues until only one person remains standing, who is declared the winner and receives the prize. 
This method is widely used in Japan.

Board game designer Yoshiteru Shinohara\footnote{\url{https://boardgamegeek.com/boardgamedesigner/80468/yoshiteru-shinohara}} introduced a more engaging variation known as Greedy\footnote{This is because  greedy players are more willing to show paper at the risk of being eliminated immediately. } RPS, or 
Shinohara RPS in honor of its inventor.\footnote{In Japanese, Shinohara janken or Yokubari janken.} 
Before the game, the host explains the rule:
\begin{quote}
I will always show rock in every round.
This gives you a simple way to win: Just show paper. However, there is a catch to keep things interesting. If two or more players show paper in the same round, they are immediately eliminated.
So, you must decide: Do you play it safe by choosing rock, or take a risk with paper, hoping that no one else does the same?
\end{quote}
The game is played by at least three players and proceeds as follows.
\begin{enumerate}
	\item When three or more players remain, each round begins with the host calling out ``Rock-Paper-Scissors'' while demonstratively showing rock as the cue for Step 2. If exactly two players remain, they play classic RPS to determine the winner. 
	\item On the host’s cue, all remaining players simultaneously show rock or paper.
	\begin{itemize}
		\item If exactly one player shows paper, that player is declared the winner. 
		\item If two or more players show paper and two or more players show rock, the former are eliminated, and the latter move on to the next round.
		\item If exactly one player shows rock, that player is declared the winner.
		\item If all players show the same option (paper or rock), they repeat the round.\footnote{In the next section, where we formulate the game as an extensive game, we assume that the payoff for an infinite history with $n$ players remaining is $1/n$.} 
        %\item 
	\end{itemize}
\end{enumerate}

Shinohara RPS requires players to anticipate their opponents' choices. A player might choose rock by predicting that at least two others will choose paper, or choose paper in the expectation that all opponents will favor rock. 
The excitement of the game lies in outsmarting opponents through evolving strategies, which distinguishes Shinohara RPS from the purely random nature of classic RPS.

Shinohara RPS may also be of interest to game theory researchers since it is an extensive game that has not been studied in depth. This game raises nontrivial questions about how the equilibrium probabilities of paper and rock evolve as rounds progress. 
 Moreover, in the author's experience, incorporating Shinohara RPS into a game theory course can be beneficial.\footnote{See Appendix \ref{Classroom use}.} Having students play the game improves their understanding of key concepts such as strategic environments and mixed strategies. 
 It also promotes a lively and collaborative classroom atmosphere.

In light of the above, this paper introduces Shinohara RPS to researchers and studies its subgame perfect equilibria (SPE), where the expected payoff is the probability of winning.  
This game has a unique symmetric SPE, in which
the probability of choosing paper is the solution to the equation \((1-p)^{n-1} + p^{n-1}/n = 1/n\) (see Figure~\ref{fig1}) when $n$ players remain. 
The game also admits pure-strategy asymmetric SPE, including the following two types. In one type, one player chooses paper and the other players choose rock in each round. In another type, two players choose paper and the other players choose rock in each round. 
Furthermore, there is a continuum of SPE in which exactly one player adopts a mixed strategy.
The considerable variety of SPE in Shinohara RPS suggests that no particular equilibrium is likely to emerge consistently in actual play.

\begin{figure}[t]
\caption{The graph of the equilibrium probability of paper}
\begin{center} \includegraphics[height=7cm]{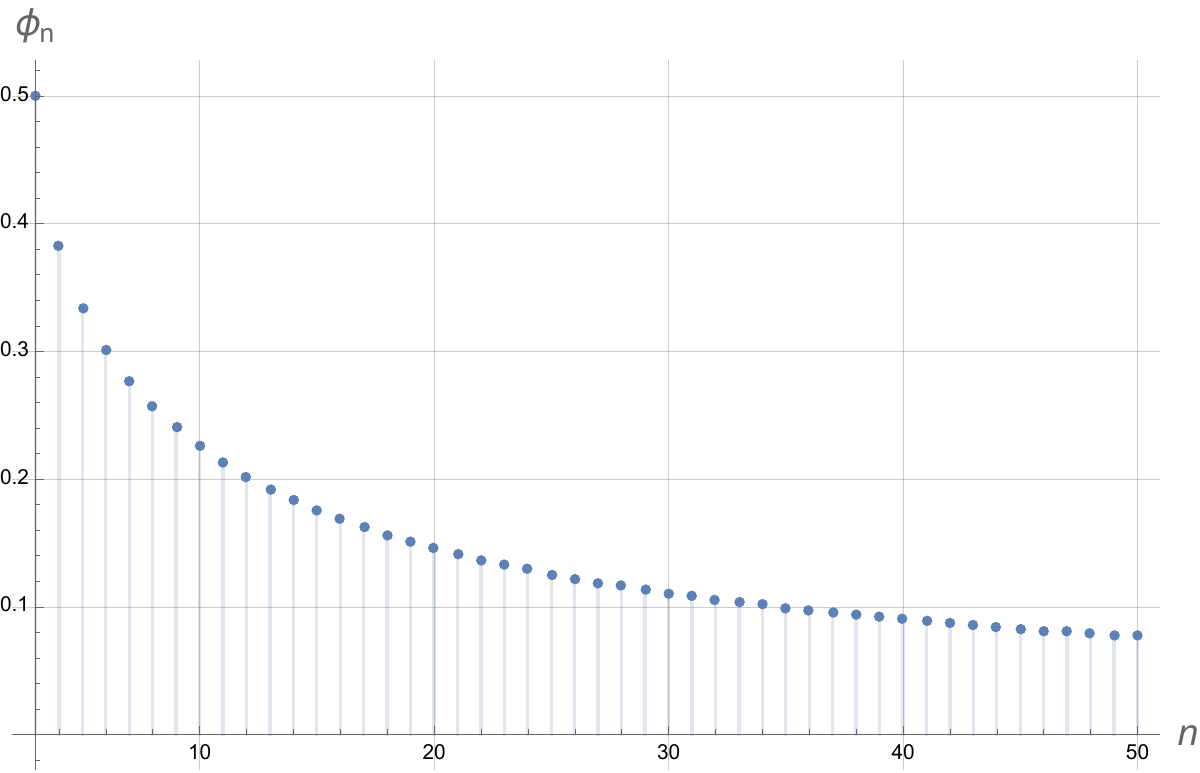}
\end{center}\label{fig1}
\end{figure}

\section{Equilibrium of Shinohara RPS}

Shinohara RPS is modeled as the following extensive game
with observable actions and simultaneous moves. 
Let $I=\{1,\ldots,|I|\}$ be the set of players with $|I|\geq 3$. 
An action of each player is either rock or paper. 
A history is a sequence of action profiles of surviving players. 
A terminal history is a finite history that ends with a single winner or with exactly two players remaining, or an infinite history in which the set of surviving players remains unchanged.
For each round, we call a player a loser if this player is eliminated or another player wins.
A strategy of a player is a function that assigns an action to each non-terminal history in which that player has not yet lost. 
All losers receive a payoff of $0$.  
If a single player wins, that player receives a payoff of $1$.  
If exactly two players survive, each receives a payoff of $1/2$.  
If the history is infinite and $n$ players survive, each receives a payoff of ${1}/{n}$.
Rational players aim to maximize the expected payoff  given the opponents' strategies.

\subsection{Symmetric subgame perfect equilibrium}

We derive a symmetric subgame perfect equilibrium (SPE) in which every player follows the same strategy. 
For each history $h$, let $p_h$ denote the equilibrium probability of paper in the first round after the history $h$. 
We must have $0<p_h <1$ for the following reason: If $p_h=0$ (or $p_h=1$), every player chooses rock (or paper), but one player can win by deviating to paper (or rock). 
For each nonterminal history $h$, let $n_h\ge 2$ denote the number of remaining players. 
%Recall that the subgame following a history $h$ with $n_h= 2$ is classic RPS, where each player wins with an equal probability $1/2$ in the unique equilibrium.

Consider the subgame following a history $h$ with $n_h=n\geq 3$. 
Each player has the same chance of winning $1/n$.  
Thus, the probability of winning for a player who will choose paper in the first round of this subgame is also $1/n$ since the player is indifferent between paper and rock under a mixed-strategy SPE. 

This probability can also be calculated in a different way. Note that the player who chooses paper wins if and only if one of the following events occurs.
\begin{enumerate}[A.]
\item All the other players choose rock in the first round, which results in this player winning.
\item (i) All the other players choose paper in the first round, and (ii) in the subsequent subgame, this player wins.
\end{enumerate}
Thus, the probability of winning for this player is the probability of either A or B occurring. 
The probability of A is equal to $(1-p_h)^{n-1}$. 
The probability of B is equal to the probability of B-(i) multiplied by the probability of B-(ii). The former is equal to $p_h^{n-1}$, and the latter is equal to $1/n$. 
Therefore, the probability of winning for the player choosing paper is $(1-p_h)^{n-1}+p_h^{n-1}/n$. 

Combining the above arguments, it follows that $p_h$ is a solution to the equation 
\begin{equation}
(1-p_h)^{n-1}+\frac{p_h^{n-1}}{n}=\frac{1}{n},
\label{prob1}
\end{equation}
which can be rewritten as 
\begin{equation}\frac{(1-p_h)^{n-2}}{1+\sum_{k=1}^{n-2}p_h^{k}}=\frac{1}{n}.\label{prob2}
\end{equation}
We can make two observations from this equation. 
First, a solution is determined solely by the number of remaining players $n$. 
Second, a solution in the open unit interval is unique since the left-hand side of \eqref{prob2} is decreasing in $p_h$ with the minimum zero and the maximum one. 
Let $\phi_n\in (0,1)$ denote the unique solution,  
which is plotted in Figure \ref{fig1} and listed in Table \ref{table1}.

Therefore, if $p_h$ is an equilibrium probability of paper, then it must be equal to $\phi_n$. 
The converse is also true: If players choose paper with probability $p_h=\phi_n$ for every history $h$, then they are indifferent between paper and rock, so it must be an equilibrium probability of paper. 
That is, we obtain the following proposition.

\begin{proposition}\label{main result}
There exists a unique symmetric SPE, in which the probability of choosing paper is given by $\phi_n$ for each $n\geq 3$. 
\end{proposition}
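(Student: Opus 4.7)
The plan is to make the heuristic derivation in the paragraphs above rigorous by fixing a symmetric SPE, deriving the indifference equation \eqref{prob1} as a necessary condition on $p_h$, proving that this equation has a unique root in $(0,1)$, and finally verifying that the strategy profile defined by this root is itself a symmetric SPE. Both directions hinge on the fact that in any symmetric SPE the value of each subgame with $n$ remaining players is $1/n$ per surviving player.

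For the forward direction, I would suppose a symmetric SPE exists and fix a non-terminal history $h$ with $n_h = n \ge 3$. The excluded values $p_h \in \{0,1\}$ are ruled out by the unilateral-deviation arguments already given in the text, so $p_h \in (0,1)$, and mixing implies each player is indifferent between paper and rock. Under the symmetric profile the $n$ survivors are interchangeable and the total payoff mass is $1$ (either one eventual winner with payoff $1$, or the all-paper infinite history whose payoff is $1/n$ per player by definition), so each expected payoff equals $1/n$. Computing the same quantity conditional on paper, the winning event decomposes into the cases A and B from the excerpt; the continuation subgame in case B(ii) has exactly the same $n$ remaining players and is strategically identical, so its value is again $1/n$. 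This produces \eqref{prob1}, equivalent to \eqref{prob2}.

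Uniqueness of $\phi_n \in (0,1)$ is then immediate from \eqref{prob2}: the left-hand side is continuous and strictly decreasing from $1$ at $p_h = 0$ to $0$ at $p_h = 1$, so it crosses $1/n$ exactly once. For the converse direction, I would define the candidate strategy by letting every player play paper with probability $\phi_{n_h}$ at every non-terminal history and verify that at every subgame both paper and rock yield expected payoff $1/n$, whence the one-shot deviation principle delivers subgame perfection. The main subtlety I anticipate is the self-referential appearance of the continuation value in case B: because the post-paper subgame is the same subgame, its value is defined by a fixed-point condition rather than by finite backward induction. The infinite-history payoff convention $1/n$ is precisely what makes this fixed-point well posed, and the strict monotonicity of \eqref{prob2} then pins down the unique root $\phi_n$.
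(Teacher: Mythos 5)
Your proposal follows essentially the same route as the paper: symmetry forces each subgame value to be $1/n$, interior mixing plus indifference yields the fixed-point equation \eqref{prob1}, monotonicity of the left-hand side of \eqref{prob2} gives uniqueness of $\phi_n$, and the converse is verified by checking indifference at every history. The only difference is cosmetic --- you explicitly flag the self-referential continuation value and invoke the one-shot deviation principle, points the paper treats implicitly --- so the argument is correct and matches the paper's proof.
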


\begin{table}[t]
 \caption{The equilibrium probability of paper}
\begin{tabular}{ccccccccccc}
\hline
$n$ & & & 3 & 4 & 5 & 6 & 7 & 8 & 9 & 10  \\
$\phi_n$ & & & 0.500 & 0.382 & 0.333 & 0.302 & 0.277 & 0.257 & 0.240 & 0.226\\\hline
$n$ & 11 & 12 & 13 & 14 & 15 & 16 & 17 & 18 & 19 & 20  \\
$\phi_n$  & 0.213 & 0.202 & 0.192 & 0.184 & 0.176 & 0.169 & 0.162 & 0.156 & 0.151 & 0.146 \\\hline
$n$ & 21 & 22 & 23 & 24 & 25 & 26 & 27 & 28 & 29 & 30  \\
$\phi_n$ & 0.141 & 0.137& 0.133 & 0.129 & 0.126 & 0.122 & 0.119 & 0.116 & 0.113 & 0.111 \\\hline
$n$ & 31 & 32 & 33 & 34 & 35 & 36 & 37 & 38 & 39 & 40  \\
$\phi_n$ & 0.108 & 0.106 & 0.104 & 0.101 & 0.099 & 0.097 & 0.095 & 0.094 & 0.092 & \
0.090\\\hline
$n$ & 41 & 42 & 43 & 44 & 45 & 46 & 47 & 48 & 49 & 50  \\
$\phi_n$ & 0.089 & 0.087 & 0.086 & 0.084 & 0.083 & 0.083 & 0.080 & 0.079 & 
0.078 & 0.077\\\hline
\end{tabular}\label{table1}
\end{table}

\subsection{Asymmetric Markov perfect equilibrium}

In Shinohara RPS, each history determines the set of remaining players $N\subseteq I$, which is a state of the game. 
A strategy is referred to as a Markov strategy if it depends on a history only through the state, and an SPE composed of Markov strategies is referred to as a Markov perfect equilibrium (MPE) \citep{fudenberg1991}. 
The symmetric SPE described above is one example of an MPE.

We demonstrate the existence of a continuum of asymmetric MPE.
To facilitate the exposition of Markov strategies in a round,
we refer to a player who chooses paper with probability $p\in [0,1]$ as a type-$p$ player. 

First, 
consider a Markov strategy profile in which, for each state $N$ with $|N| \geq 3$:  
\begin{itemize}
    \item Exactly one player is of type-$1$. 
    \item All other players are of type-$0$. 
\end{itemize}
Such a Markov strategy profile is an MPE because   
no player has an incentive to deviate, regardless of the selection of a type-$1$ player.
\begin{itemize}
    \item The type-$1$ player wins with probability one. Since switching to rock does not increase the winning probability, she has no incentive to deviate.  
    \item Each type-$0$ player loses with probability one. If a type-$0$ player switches to paper, this player still loses with probability one. Thus, there is no incentive to deviate.  
\end{itemize}

Next, consider another Markov strategy profile in which, for any $N$ with $|N| \geq 3$:  
\begin{itemize}
    \item Exactly two players are of type-$1$. 
    \item All other players are of type-$0$. 
\end{itemize}
Such a Markov strategy profile is also an MPE, regardless of the selection of type-$1$ players.

\begin{itemize}
    \item Each type-$1$ player loses with probability one. Even if a type-$1$ player switches to rock, she still loses with probability one, in which case the other type-$1$ player wins. Thus, there is no incentive to deviate.  
    \item If $|N|\geq 4$, each type-$0$ player moves on to the next round, but switching to paper results in immediate elimination. If $|N|=3$, there is exactly one type-$0$ player, who wins with probability one. 
 In both cases, a type-$0$ player has no incentive to deviate.  
\end{itemize}

Finally, consider a convex combination of the above Markov strategy profiles, 
 in which, for any $N$ with $|N| \geq 3$:  
\begin{itemize}
    \item Exactly one player is of type-$1$. 
    \item Exactly one player is of type-$q_N$ with $q_N\in [0,1]$. 
    \item All other players are of type-$0$. 
\end{itemize}
Such a Markov strategy profile is also an MPE, regardless of the selection of type-$1$ and type-$q_N$ players, for any $q_N\in [0,1]$.
\begin{itemize}
    \item The best response of the type-$1$ player is to choose paper in both cases where the type-$q_N$ player chooses rock or paper, just as in the case of a type-$1$ player in the above MPE. Thus, the type-$1$ player has no incentive to deviate for arbitrary $q_N\in [0,1]$.  
    \item The best response of a type-$0$ player is to choose rock in both cases where the type-$q_N$ player chooses rock or paper, just as in the case of a type-$0$ player in the above MPE. Thus, a type-$0$ player has no incentive to deviate for arbitrary $q_N\in [0,1]$.    
    \item The type-$q_N$ player is indifferent between rock and paper since both choices result in a loss with probability one. Thus, there is no incentive to deviate.  
\end{itemize}
This implies the following proposition. 

\begin{proposition}\label{main result 3}
There exists a continuum of asymmetric MPE. 
\end{proposition}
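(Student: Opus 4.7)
The plan is to construct an explicit family of Markov strategy profiles parameterized by a real number and to verify each is an MPE via a one-shot-deviation argument. For every state $N$ with $|N|\geq 3$, I would designate one player as type-$1$, one other player as type-$q_N$ (for some $q_N\in[0,1]$), and all remaining players as type-$0$. Varying $q_N\in[0,1]$ at even a single state produces uncountably many distinct profiles, so it suffices to show that every such profile is an MPE.

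To verify optimality at a given state, I would condition on the realized action of the type-$q_N$ player. When she plays rock (probability $1-q_N$) the round is indistinguishable from the first asymmetric MPE constructed above (one type-$1$, rest type-$0$); when she plays paper (probability $q_N$) it is indistinguishable from the second (two type-$1$, rest type-$0$). In both reductions, paper is a best response for the designated type-$1$ player and rock is a best response for every type-$0$ player, as already established in the two constructions. By the law of total expectation the same choices remain best responses against the $q_N$-mixture, so neither the type-$1$ player nor any type-$0$ player has a profitable one-shot deviation.

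The type-$q_N$ player is handled separately and trivially: in either realization of her own action she obtains a continuation payoff of $0$ (if she plays paper she is eliminated together with the type-$1$ player; if she plays rock, the type-$1$ player is the unique paper-player and wins the round). Hence she is indifferent between rock and paper, and any mixing probability $q_N\in[0,1]$ is optimal. Applying the one-shot-deviation principle at every state then promotes round-by-round optimality to SPE, and the map $(q_N)_N\mapsto\sigma^q$ embeds the uncountable set $[0,1]$ into the asymmetric MPE of the game.

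The main obstacle I expect is carefully discharging the boundary case $|N|=3$ inside the two reductions: there, a type-$0$ player's hypothetical deviation to paper can leave a single rock-player who then wins the round outright, so I must check that her continuation payoff under the deviation is still $0$ rather than positive. Once this and the other rule-specific corner cases (in particular, repetition of the round when all players coincide and the $1/n$ tie-payoff on infinite histories) are verified inside each of the two base MPE, the conditioning argument goes through unchanged and the proposition follows.
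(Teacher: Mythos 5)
Your construction is exactly the one in the paper: two pure base profiles (one type-$1$ plus all type-$0$; two type-$1$ plus all type-$0$) and then the ``convex combination'' profile with one type-$1$, one type-$q_N$, and the rest type-$0$, verified by conditioning on the type-$q_N$ player's realized action and noting she loses with probability one either way. The argument is correct and matches the paper's proof, with your treatment of the $|N|=3$ corner cases being slightly more explicit than the paper's.
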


\newpage

\begin{center}
{\Large\bf Appendix}
\end{center}

\appendix
\renewcommand{\theequation}{\alph{equation}}
\setcounter{equation}{0}

% Ensure propositions in the appendix are labeled A, B, C, ...
\renewcommand{\theproposition}{\Alph{proposition}}
\setcounter{proposition}{0}  % Reset proposition counter for the appendix

\section{Mixed-strategy Markov perfect equilibrium}

The unique symmetric SPE is a totally mixed-strategy MPE.
We conjecture that no other totally mixed-strategy MPE exists, which remains unproven.
To facilitate future research, this appendix derives a system of equations characterizing a totally mixed-strategy MPE.

A totally mixed Markov strategy profile is given by the collection of probabilities 
\[
\{\pi_{iN}\in (0,1)\mid \text{$i\in N\subseteq I$, $|N|\geq 3$}\},
\]
where $\pi_{iN}$ is the probability of choosing paper for player $i\in N$ in a round with a state $N$. 
Given a totally mixed Markov strategy profile $\{\pi_{iN}\}$, consider a subgame with a state $N$. 
For each $i\in N$, we denote by $\rho_{iN}\in(0,1)$ the probability that player $i$ wins in the subgame. 
Note that $\sum_{i\in N}\rho_{iN}=1$. 

When player $i$ chooses paper in the first round, her winning probability is 
\begin{equation}
\prod_{j\in N_{-i}}(1-\pi_{jN})+\rho_{iN}\prod_{j\in N_{-i}}\pi_{jN},\label{winning prob paper}
\end{equation}  
where $N_{-i}=N\setminus\{i\}$. This follows from an argument similar to that used for the symmetric SPE. 
Player $i$ who chooses paper wins if and only if all the other players choose rock in the first round, with the probability given by the first term, or all the other players choose paper in this round, followed by player $i$'s victory in the subsequent subgame with the same state $N$, with the probability given by the second term.

When player $i$ chooses rock in the first round, 
her winning probability is 
\begin{equation}
\prod_{j\in N_{-i}}\pi_{jN}+\rho_{iN}\prod_{j\in N_{-i}}(1-\pi_{jN})
+\sum_{S\subset N:\, i\in S,\ 2\leq |S|\leq |N|-2}\rho_{iS}\prod_{j\in S_{-i}}(1-\pi_{jN})\prod_{j\in N\setminus S}\pi_{jN},\label{winning prob rock}
\end{equation}  
where $S_{-i}=S\setminus\{i\}$. 
The first term is the probability that all the other players choose paper in the first round. 
The second term is the probability that all the other players choose rock in this round and then player $i$ wins in the subsequent subgame with the same state $N$.
Each term in the final summation is  
the probability that players in $S$, which includes player $i$, choose rock and 
players in $N\setminus S$ choose paper in this round and then player $i$ wins in the subsequent subgame with the state $S$.

Therefore, player $i$ is indifferent between paper and rock if and only if \eqref{winning prob paper} and \eqref{winning prob rock} are equal to $\rho_{iN}$. 
This leads to the following characterization of a totally mixed-strategy MPE.

\begin{proposition}\label{main result 2}
A totally mixed Markov strategy profile $\{\pi_{iN}\}$ is an MPE if and only if it satisfies the following system of equations: For all $N\subseteq I$ with $|N|\geq 3$ and all $i\in N$,
%\begin{equation}
%\sum_{i\in N}\frac{\prod_{j\in N_{-i}}(1-\pi_{jN})}{1-\prod_{j\in N_{-i}}\pi_{jN}}=1,\label{sum of win}
%\end{equation}
\begin{equation}
\frac{\prod_{j\in N_{-i}}(1-\pi_{jN})}{1-\prod_{j\in N_{-i}}\pi_{jN}}=
\frac{\prod_{j\in N_{-i}}\pi_{jN}+\Lambda_{N}}{1-\prod_{j\in N_{-i}}(1-\pi_{jN})},
\label{rho equal}
\end{equation}
where 
\begin{equation}
\Lambda_{N}=
%\prod_{j\in N_{-i}}\pi_{jN}+
\sum_{S\subset N:\, i\in S,\ 2\leq |S|\leq |N|-2}\rho_{iS}\prod_{j\in S_{-i}}(1-\pi_{jN})\prod_{j\in N\setminus S}\pi_{jN},\label{lambda eq}	
\end{equation}
\begin{equation}
\rho_{iS}=
\begin{cases}
\displaystyle\frac{\prod_{j\in N_{-i}}(1-\pi_{jS})}{1-\prod_{j\in S_{-i}}\pi_{jS}} &\text{ if }|S|\geq 3,\\
1/2 & \text{ if }|S|=2.\\
\end{cases}
\label{rho eq}		
\end{equation}
 \end{proposition}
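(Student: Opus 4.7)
The plan is to apply the one-shot deviation principle: a Markov profile $\{\pi_{iN}\}$ is an MPE if and only if, in every subgame with state $N$ ($|N|\geq 3$), each player $i\in N$ cannot strictly gain by changing her first-round action while keeping the continuation strategies fixed. Since $\pi_{iN}\in(0,1)$, this is equivalent to requiring indifference between paper and rock at every state, i.e., the two first-round winning probabilities \eqref{winning prob paper} and \eqref{winning prob rock} must both equal the overall subgame winning probability $\rho_{iN}$.

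The first step is to derive the closed form for $\rho_{iN}$ in \eqref{rho eq} from the paper-side indifference. Setting \eqref{winning prob paper} equal to $\rho_{iN}$ and solving for $\rho_{iN}$ yields
\[
\rho_{iN}\;=\;\frac{\prod_{j\in N_{-i}}(1-\pi_{jN})}{1-\prod_{j\in N_{-i}}\pi_{jN}},
\]
which is the $|S|\geq 3$ case of \eqref{rho eq}. The base case $|S|=2$ is just the classic two-player RPS subgame stipulated in the model, giving $\rho_{iS}=1/2$. I would proceed by strong induction on $|N|$, so that when analyzing state $N$ the values $\rho_{iS}$ for proper subsets $S\subsetneq N$ with $|S|\geq 2$ are already pinned down by \eqref{rho eq} and the base case; these enter only through $\Lambda_N$ in \eqref{lambda eq}.

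The second step is to derive \eqref{rho equal} from the rock-side indifference. Setting \eqref{winning prob rock} equal to $\rho_{iN}$ and solving again for $\rho_{iN}$ gives
\[
\rho_{iN}\;=\;\frac{\prod_{j\in N_{-i}}\pi_{jN}+\Lambda_{N}}{1-\prod_{j\in N_{-i}}(1-\pi_{jN})}.
\]
Equating this with the paper-side expression produces exactly \eqref{rho equal}. For the converse direction, I would simply reverse the algebraic manipulation: if \eqref{rho equal}--\eqref{rho eq} hold, then defining $\rho_{iN}$ by \eqref{rho eq} and tracing back the derivation shows that both \eqref{winning prob paper} and \eqref{winning prob rock} equal $\rho_{iN}$, so every player is indifferent at every subgame and the one-shot deviation principle certifies an MPE.

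The main subtlety, rather than a genuine obstacle, is the self-referential structure of the winning probabilities: the event that all players repeat their choice sends the game back to the same state $N$, which is why $\rho_{iN}$ appears on both sides of \eqref{winning prob paper} and \eqref{winning prob rock}. Solving for $\rho_{iN}$ introduces the denominators $1-\prod_{j\in N_{-i}}\pi_{jN}$ and $1-\prod_{j\in N_{-i}}(1-\pi_{jN})$, and one must verify both are strictly positive; this is immediate from $\pi_{jN}\in(0,1)$. Implicit in the derivation is the fact that the infinite-repetition payoff $1/n$ is consistent with the conditional winning probability in the stationary subgame: with probability one the state eventually changes (since the per-round probability of a non-repeating profile is positive), so the geometric sum collapses and validates the self-referential equations used above.
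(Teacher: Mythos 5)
Your proposal is correct and follows essentially the same route as the paper: set the paper-side expression \eqref{winning prob paper} and the rock-side expression \eqref{winning prob rock} equal to $\rho_{iN}$, solve each for $\rho_{iN}$ to obtain \eqref{rho eq} and then \eqref{rho equal}, with $\rho_{iS}=1/2$ for $|S|=2$ as the base case, and reverse the algebra for the converse. Your additional remarks (the one-shot deviation framing, strong induction on $|N|$, positivity of the denominators, and the probability-zero infinite histories) merely make explicit steps the paper leaves implicit.
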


\begin{proof}%[Proof of Proposition \ref{main result 2}]
Suppose that $\{\pi_{iN}\}$ is an MPE. 
Recall that classic RPS is played when the number of remaining players is two, so $\rho_{iS}=1/2$ if $|S|=2$. 

For any $N$ with $|N|\geq 3$, \eqref{winning prob paper} is equal to $\rho_{iN}$. 
Solving this equation for $\rho_{iN}$ gives 
\begin{equation}
\rho_{iN}=
\frac{\prod_{j\in N_{-i}}(1-\pi_{jN})}{1-\prod_{j\in N_{-i}}\pi_{jN}},\label{rho1}
\end{equation}  
which implies \eqref{rho eq}.
In addition, \eqref{winning prob rock} is equal to $\rho_{iN}$. 
Solving this equation for $\rho_{iN}$ gives 
\begin{align}
\rho_{iN}=
\frac{\prod_{j\in N_{-i}}\pi_{jN}+\Lambda_{N}}{1-\prod_{j\in N_{-i}}(1-\pi_{jN})}.\label{rho3}
\end{align}  
Then, \eqref{rho equal} is implied by 
\eqref{rho1} and \eqref{rho3}. 

Conversely, if $\{\pi_{iN}\}$ satisfies the system of equations, every player in every state is indifferent between paper and rock by the above discussion, so it is an MPE. 
\end{proof}

Since the symmetric SPE is a totally mixed-strategy MPE, the system of equations has a solution: $\pi_{iN}=\phi_{|N|}$ for all $N\subseteq I$ and $i\in N$. 
Whether another solution exists is an open question.

If $|N|= 3$, we have  
$\Lambda_N=0$. In this case, 
we can directly show that the system of equations has a unique solution $\pi_{iN}=\phi_{|N|}$ for all $i\in N$.
Thus, one might attempt to prove that $\pi_{iN}=\phi_{|N|}$ holds for all $N$ and $i\in N$ by induction, assuming the induction hypothesis that $\rho_{iS}=1/|S|$ for all $S$ with $|S|\leq n-1$.  
This hypothesis simplifies \eqref{rho equal} to an equation with $(\pi_{iN})_{i\in N}$ as the only variables. 
However, this equation remains quite complicated due to $\Lambda_N\neq 0$, making the proof challenging. 

\section{Classroom use of Shinohara RPS}\label{Classroom use}

This appendix illustrates the pedagogical use of Shinohara RPS by explaining how the author incorporates it into a game theory course.

I introduce Shinohara RPS on the first day of an undergraduate game theory course or during the game theory unit in a microeconomics course. It serves as an entry point for presenting game theory as a method for analyzing strategic environments, where decision-makers' optimal choices depend on their beliefs about others' choices. I emphasize that such situations are ubiquitous in social and economic life, and that the ability to anticipate and understand others' beliefs is essential in these contexts. I also note that this ability is studied in other disciplines, such as psychology and philosophy, under the concept of theory of mind.

To illustrate strategic environments in an engaging way, I explain the rules of Shinohara RPS and immediately have the students play the game, with myself acting as the host. At the start of the game, all students are asked to stand. After each round, those who are eliminated must sit down. The number of standing participants decreases with each round, creating a visible sense of progress. The game accommodates the entire class simultaneously, making it suitable even for large, lecture-based courses without teaching assistants. It also works well in smaller seminar settings, provided there are at least four students.

Once a winner is determined, I ask the class to reflect: What were you thinking when you made your choice? How did you predict what others would do? These discussions are always lively, as students are eager to explain their reasoning. In smaller classes, their explanations often refer to the perceived personalities or behavioral tendencies of their classmates. Through this process, students naturally experience the essence of strategic reasoning and belief formation.

Later in the course, I use both classic RPS and the three-player Shinohara RPS to illustrate mixed-strategy Nash equilibrium. When introducing extensive games, I return to Shinohara RPS and its SPE, demonstrating how a simple classroom game can be rigorously analyzed.

\bigskip\bigskip\medskip

%\begin{center}
\noindent{\Large\bf Conflict of Interest}\bigskip
%\end{center}

\noindent 
The author declares no conflicts of interest regarding this manuscript.


\begin{thebibliography}{Fudenberg and Tirole(1991)}
	

\bibitem[Fudenberg and Tirole(1991)]{fudenberg1991} Fudenberg, D., Tirole, J., 1991. Game Theory. MIT Press.

\end{thebibliography}
\end{document}